\definecolor{darkgreen}{rgb}{0,.35,0}
\definecolor{darkblue}{rgb}{0,0,.5}
\definecolor{darkred}{rgb}{.6,0,0}
\numberwithin{equation}{section}
\DeclareMathOperator{\lc}{lc}
\newcommand{\ZZ}{{\mathbb{Z}}}
\newcommand{\QQ}{{\mathbb{Q}}}
\newcommand{\RR}{{\mathbb{R}}}
\newcommand{\NN}{{\mathbb{N}}}
\newcommand{\KK}{{\mathbb{K}}}
\newcommand{\diff}{{\lower3pt\hbox{\large$'$}}}
\renewcommand*\l@author[2]{}
\renewcommand*\l@title[2]{}
\begin{document}

\title{Factorization of $\ZZ$-homogeneous Polynomials in the First
  ($q$-)Weyl Algebra}

\author{Albert Heinle${}^1$ and Viktor Levandovskyy${}^2$}

\institute{${}^1$Cheriton School of Computer Science, 200 University
  Ave. West, Waterloo, \mbox{N2L 3G1}, Ontario, Canada. Email: \url{aheinle@uwaterloo.ca}\\
${}^2$Lehrstuhl D f\"ur Mathematik, RWTH Aachen University,
Templergraben 64, 52062 Aachen, Germany. Email: \url{viktor.levandovskyy@math.rwth-aachen.de}}

\maketitle

\begin{abstract}
We present algorithms to factorize weighted
homogeneous elements in the polynomial first Weyl algebra and $q$-Weyl algebra, which
are both viewed as $\ZZ$-graded rings. We show that
factorization of homogeneous polynomials can be almost completely
reduced to commutative univariate factorization over the same base
field with some additional uncomplicated combinatorial
steps. This
allows to deduce the complexity of our algorithms in detail.
Furthermore, we will show for homogeneous polynomials that
irreducibility in the polynomial first Weyl
algebra also implies irreducibility in the rational one, which is of interest
for practical reasons. We report on our implementation in the computer algebra
system \textsc{Singular}. For homogeneous polynomials, it outperforms
 currently available
implementations for factoring in the first Weyl algebra -- in speed as
well as in elegancy of the results.
\end{abstract}

\keywords{Factorization, ($q$-)Weyl Algebra, Noncommutative, Ore Algebra, Complexity}

\tableofcontents

\section{Introduction}

Algebras of operators, such as the $q$-Weyl and the Weyl algebras, are important
objects to study since, among other things, one can derive properties of the solution spaces of their associated systems of equations one wants to solve. Especially
concerning the problem of finding the solutions of a linear ordinary
($q$-)differential equation, the preconditioning step of factorizing this operator may come in helpful.

But often algebras of operators are noncommutative polynomial
rings, and a factorization of an element in those algebras is neither unique
in the classical sense (i.e. unique up to multiplication by a unit),
nor easy to compute at all in general.

Nevertheless, a lot has been done in this field in the past. Tsarev
has studied the form, number and the properties of the factors of a
linear differential operator in \cite{Tsarev:1994} and \cite{Tsarev:1996},
where he uses and extends the work presented in \cite{Loewy:1903}
and \cite{Loewy:1906}.

A very general approach to noncommutative algebras and their
properties, including factorization, is also done by Bueso,
Gomez-Torrecillas and Verschoren in (\cite{Bueso:2003}). They provide several algorithms
and introduce various points of views when dealing with noncommutative
polynomial algebras.

In his dissertation van Hoeij developed an algorithm to factorize
a linear differential operator (\cite{Hoeij:1996}). There were several
papers following that dissertation using and extending those
techniques (e.g. \cite{Hoeij:1997}, \cite{van1997formal} and \cite{Hoeij:2010}), and
nowadays this algorithm is implemented in the \texttt{DETools} package
of \textsc{Maple} (\cite{Maple}) as the standard algorithm for
factorization of those operators.

For the finite field case, Giesbrecht and Zhang have developed a
polynomial time algorithm to factor polynomials in
$\mathbb{F}_q(t)[\mathcal{D};\sigma, \delta]$
(\cite{giesbrecht2003factoring}). This includes the Weyl algebras
with rational function coefficients over a finite field. The applied
methodology extends the results in \cite{giesbrecht1998factoring}.

From a more algebraic point of view and dealing only with strictly
polynomial noncommutative algebras, i.e. all units are in the center
of the algebra,
Melenk and Apel developed a package for the
computer algebra system \textsc{REDUCE} (\cite{Melenk:1994}). This package
provides tools to deal with noncommutative polynomial algebras and
also contains a factorization algorithm for the supported
algebras.

In the computer algebra system \textsc{ALLTYPES}
(\cite{Schwarz:2009}), which is based on \textsc{REDUCE} and solely accessible as a
web-service, Schwarz and Grigoriev have
implemented the algorithm for factoring differential operators they
introduced in \cite{GrigorievSchwartz:2004}.

Beals and Kartashova (\cite{beals2005constructively}) consider
the problem of finding a first-order left hand factor of an element
from the second Weyl algebra over a computable differential field, where
they are able to deduce parametric factors. Similarly,
Shemyakova studied factorization
properties of linear partial differential operators in \cite{parameters:shemyakova:2007},
\cite{shemyakova2009multiple} and
\cite{2010:shemyakova:refinement}.

Concerning special classes of polynomials in algebras of operators, the
paper \cite{foupouagnigni2004factorization} deals with
factorization of fourth-order differential equations
satisfied by certain Laguerre-Hahn orthogonal polynomials (\cite{nikiforov1988special}).

Those algorithms and implementations are very well written and
they are able to factorize a large number of polynomials we give them
as input. Nonetheless, as we will see in this paper, there exists a large
class of polynomials that seem to form the worst case for the mentioned algorithms. One can use a different approach to obtain a
factorization of such polynomials very quickly, and we will prove that this factorization into irreducible elements is also irreducible in the rational first
($q$-)Weyl algebra. This approach extends the one developed in
\cite{Heinle:2010}. In this work we deal with this class of polynomials by describing our
methods in detail and providing a complexity estimate for the
factorization in the case, where the underlying field is
computable. A very recent algorithm for factoring general polynomials, which is  based on the results presented here, is given in
\cite{giesbrecht2014factoring}.
 We state another main result in Theorem
\ref{thrm:irreducibleRationalWeyl}. There, we prove that irreducible
homogeneous polynomials in the polynomial first Weyl algebra stay
irreducible when considering them as elements in the rational first
Weyl algebra. This is rather unexpected, as
this statement is not true for general, i.e. inhomogeneous, polynomials.

Our algorithms are implemented in the computer algebra system 
\textsc{Singular} (\cite{Singular:2012}, \cite{GreuelPfister:2007}, \cite{Plural:2010}), and since version 3-1-3 they became part of
the distribution as the library \texttt{ncfactor.lib}.

\subsection{Preliminaries}

We will start by introducing the first $q$-Weyl algebra and the
first Weyl algebra. By $\KK$, we always denote an arbitrary field. All
algebras are unital associative $\KK$-algebras. For
the complexity discussions, we assume that
\begin{itemize}
\item[(i)] $\KK$ is computable and
its arithmetics have polynomial costs with respect to the bit-size of the
elements in $\KK$.
\item[(ii)] There exists a norm $|\cdot|: \KK \to \RR$. The
  representation size in bits for an element $k \in \KK$ is bounded by
  $\lceil \log |k| \rceil$.
\end{itemize}
The role of the invertible parameter $q$ can be different: from $q \in \KK$ to
$q$ being transcendental over $\KK$. We use the unified notation $\KK(q)$ for
all these cases. Moreover, for $m\in\NN$ we denote by $\underline{m}$ the set 
$\{1, \ldots, m\}$.

\begin{definition}
The \textbf{polynomial first $q$-Weyl algebra} $Q_{1}$ is
defined as
\begin{eqnarray*}
        Q_1 := \KK(q)\langle x,\partial | \partial x = qx\partial +1\rangle.
\end{eqnarray*}
For the special case where $q = 1$ we have the \textbf{polynomial first Weyl algebra}, which is denoted by $A_1$.
\end{definition}

\begin{remark}
 The  first $q$-Weyl algebra can be viewed as an algebra associated to the
 operator $$\partial_{q}: f(x) \mapsto \frac{f(qx)-f(x)}{(q-1)x},$$
also known as the $q$-derivative, where $f$ is a univariate function
in $x$ (cf. \cite{Cheung:2002}).

For $q = 1$, the operator is still well defined. This can be seen in
the following way. Let $f = \sum_{i = 0}^n a_ix^i$, where $n \in
\NN_0$ and $a_i \in \KK$. Then
$$f(qx) - f(x) = \sum_{i=0}^na_i(qx)^i - \sum_{i = 0}^n a_ix^i =
\sum_{i = 0}^n a_ix^i(q^i-1).$$
The expression $q-1$ is clearly a divisor of $q^i-1$ for all $i \geq 1$, and we obtain
$$\frac{f(qx)-f(x)}{(q-1)x} = \sum_{i=1}^na_ix^{i-1}\left(\sum_{j=0}^{i-1}q^j\right).$$

\end{remark}


The first ($q$-)Weyl algebra possesses a nontrivial 
$\ZZ$-grading -- introduced by M.~Kashiwara and B.~Malgrange in a
broader context of the so-called $V$-filtration in 1983
  (\cite{Kashiwara:1983}, \cite{Malgrange:1983}) -- using the weight
  vector $[-v,v]$ for non-zero $v \in \ZZ$ on
the tuple $[x,\partial]$. For
simplicity, we will choose $v := 1$. In what follows, $\deg$ denotes
the degree induced by this weight vector. We will write $\deg_x$
and $\deg_\partial$ for the degree of a polynomial in $Q_1$
resp. $A_1$ with respect to $x$ and $\partial$. 
From now on, we mean by \emph{homogeneous} or \emph{graded} a polynomial, which is homogeneous with respect to the weight vector $[-1,1]$.

\begin{example}
  We have $\deg(x\partial) = \deg(\partial x + 1) = 0$.
  Another homogeneous polynomial is
  $$x\partial^2 + x^4\partial^5 + \partial = (x\partial +
  x^4\partial^4 + 1)\partial,$$
  which is of degree one.
\end{example}

For $n \in \ZZ$, the $n$th graded part (cf. \ref{prop:A0completeWeylDescription} for more detailed description) of $Q_1$ and analogously the $n$th graded part of $A_1$ is given by
$$Q_1^{(n)} := \left\{\sum_{j-i=n} r_{i,j}x^i\partial^j | i,j \in \NN_0,
r_{i,j} \in \KK\right\},$$

i.e. the degree of a monomial is determined by the difference of its
powers in $x$ and $\partial$.

Concerning this choice of degree, the so called \textbf{Euler operator}
$$\theta := x\partial,$$
which is homogeneous of degree 0, will play an important role as we will see soon.

First of all, let us investigate some commutation rules the Euler operator
has with $x$ and $\partial$. For $Q_1$, in order to abbreviate the size
of our formulas, we introduce the so called $q$-bracket.

\begin{definition}
  For $n \in \NN$, we define the \textbf{$q$-bracket}
  $[n]_q$ by
$$[n]_q := \frac{1-q^n}{1-q} = \sum_{i = 0}^{n-1}q^i.$$
\end{definition}

\begin{lemma}[Compare with \cite{SaStuTaka:2000}]
  \label{lem:rewriteKTheta}
  In $A_1$, the following commutation rules do hold for $n \in \NN$:
  \begin{eqnarray*}
    \theta x^n &=& x^n (\theta + n)\\
    \theta \partial^n & = &\partial^n(\theta - n).
  \end{eqnarray*}
  More generally, in $Q_1$ the following commutation rules
  do hold for $n \in \NN$:
  \begin{eqnarray*}
    \theta x^n &=& x^n (q^n\theta + [n]_q)\\
    \theta \partial^n & = & \frac{\partial^n}{q}\left(
      \frac{\theta-1}{q^{n-1}} - \frac{q^{-n+2}-q}{1-q} \right).
  \end{eqnarray*}
\end{lemma}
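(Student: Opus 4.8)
The plan is to prove all four identities by induction on $n$, using only the defining relation $\partial x = qx\partial + 1$ (and its specialization $\partial x = x\partial + 1$ for $A_1$). Since the $A_1$ statements are the $q\to 1$ limits of the $Q_1$ statements (with $[n]_q \to n$), I would either run the same induction with $q=1$ or treat $A_1$ separately; the latter is cleaner, because the $\partial^n$ formula carries the factor $1/(1-q)$ which is singular at $q=1$. I therefore describe the argument for $Q_1$ and note the $A_1$ case as its specialization.

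For the two base cases I would compute directly. For $\theta x$ one gets $\theta x = x\partial x = x(qx\partial + 1) = x(q\theta + 1)$, which is the claimed formula at $n=1$ since $[1]_q = 1$. For $\theta\partial$ I would instead first record $\partial\theta = \partial x\partial = (qx\partial + 1)\partial = q\theta\partial + \partial$, rearrange this to $\theta\partial = \frac{\partial}{q}(\theta - 1)$, and then check that it agrees with the stated expression at $n=1$, where the term $q^{-n+2}-q$ vanishes.

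For the inductive step on $\theta x^n$ I would write $\theta x^{n+1} = (\theta x^n)x$, apply the hypothesis, move $x$ through the factor $q^n\theta + [n]_q$ via the base relation $\theta x = x(q\theta + 1)$, and collect terms to obtain $x^{n+1}(q^{n+1}\theta + q^n + [n]_q)$; the step closes with the elementary identity $[n]_q + q^n = [n+1]_q$. For $\theta\partial^n$ the same scheme applies: writing the affine factor as $g_n(\theta) = q^{-n}\theta + b_n$, I would compute $\theta\partial^{n+1} = \partial^n g_n(\theta)\partial$ and push $\partial$ past $g_n(\theta)$ using $\theta\partial = \frac{\partial}{q}(\theta - 1)$, which yields the recursion $b_{n+1} = b_n - q^{-(n+1)}$ with $b_1 = -q^{-1}$.

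The main obstacle is purely bookkeeping in this last case: one must verify that the closed form $b_n = -\frac{q^{-n+2}-q}{q(1-q)} - q^{-n}$ read off from the right-hand side solves that recursion, i.e. that $-\sum_{k=1}^{n} q^{-k}$ equals the stated expression. This reduces to summing a finite geometric series together with a routine but error-prone simplification of the $1/(1-q)$ terms; there is no conceptual difficulty once the base case and the single commutation step $\theta\partial = \frac{\partial}{q}(\theta - 1)$ are established.
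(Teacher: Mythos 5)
Your proposal is correct and takes essentially the same approach as the paper, which offers no written proof beyond stating that the rules ``follow via induction on $n$'': your induction from the defining relation $\partial x = qx\partial + 1$, with base cases $\theta x = x(q\theta + 1)$ and $\theta\partial = \frac{\partial}{q}(\theta - 1)$, the identity $[n]_q + q^n = [n+1]_q$, and the recursion $b_{n+1} = b_n - q^{-(n+1)}$ with $b_1 = -q^{-1}$, is precisely that exercise carried out, and the bookkeeping closes as claimed since $-\sum_{k=1}^{n} q^{-k} = (1-q^{-n})/(1-q)$ equals the constant term read off from the stated formula. Your choice to handle $A_1$ separately rather than as the $q \to 1$ specialization is a sensible refinement, since the individual $1/(1-q)$ terms in the $Q_1$ formula are singular at $q=1$ even though their combination has the finite limit $\theta - n$.
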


Those rules follow via induction on $n \in \NN$. 

\begin{remark}
  If the characteristic of $\KK$ is some prime number $p$,
  the elements $x^{ap}$ (resp. $\partial^{ap}$) for all $a \in \NN_0$
  commute with $\theta$ in $A_1$.
\end{remark}

\begin{remark}
  With the help of the Lemma above one can also easily see that the so
  called polynomial first shift algebra
  $$\KK \langle n,s | sn = (n+1)s \rangle$$
  is a subalgebra of the first Weyl algebra $A_1$. An embedding of
  a polynomial $p = \sum_{i = 0}^{n} p_i(n)s^n$ from the shift algebra,  where $p_i \in
  \KK[n]$, into the first Weyl algebra is done via
  the following homomorphism of $\KK$-algebras:
  $$\iota: \KK \langle n,s | sn = (n+1)s \rangle \to A_1, \quad \sum_{i =
    0}^{n} p_i(n)s^n \mapsto \sum_{i = 0}^{n} p_i(\theta)\partial^n.$$

Therefore, the
  factorization techniques developed here can also be applied to
  the first shift algebra.
\end{remark}

The commutation rules in Lemma \ref{lem:rewriteKTheta} can of course be extended to arbitrary
polynomials in $\theta$.

\begin{corollary}
\label{cor:thetaswap}
Consider $f(\theta):=f \in \KK[\theta], \theta := x\partial$. Then,
in $Q_1$, for
all $n \in \NN$ we have 
\begin{eqnarray*}
  f(\theta)x^n &=& x^nf(q^n\theta + [n]_q),\\
  f(\theta)\partial^n &=& \partial^nf\left(\frac{1}{q}\left(
      \frac{\theta-1}{q^{n-1}} - \frac{q^{-n+2}-q}{1-q}\right)\right),
\end{eqnarray*}
whereas in $A_1$ we have
\begin{eqnarray*}
  f(\theta)x^n &=& x^nf(\theta + n),\\
  f(\theta)\partial^n &=& \partial^nf\left( \theta -n\right).
\end{eqnarray*}
\end{corollary}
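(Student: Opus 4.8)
The plan is to observe that all four identities have the common shape $f(\theta)\,m = m\,f(\sigma(\theta))$, where $m$ is either $x^n$ or $\partial^n$ and $\sigma\colon\KK[\theta]\to\KK[\theta]$ is the affine substitution of $\theta$ dictated by the corresponding base case in Lemma~\ref{lem:rewriteKTheta}: for $m=x^n$ in $Q_1$ one takes $\sigma(\theta)=q^n\theta+[n]_q$, for $m=\partial^n$ in $Q_1$ one takes $\sigma(\theta)=\tfrac1q\bigl(\tfrac{\theta-1}{q^{n-1}}-\tfrac{q^{-n+2}-q}{1-q}\bigr)$, and the $A_1$ cases are $\sigma(\theta)=\theta\pm n$. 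Since both sides of each identity are $\KK$-linear in $f$ and the monomials $\{\theta^k\}_{k\geq 0}$ form a $\KK$-basis of $\KK[\theta]$, it suffices to establish each identity for $f=\theta^k$, $k\in\NN_0$.

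I would then handle the monomial case by induction on $k$. The cases $k=0$ (trivial) and $k=1$ (exactly Lemma~\ref{lem:rewriteKTheta}) serve as the base. For the inductive step, fixing $m$ and $\sigma$ and using the case $k=1$ followed by the induction hypothesis, one writes
\[
\theta^{k}m=\theta^{\,k-1}(\theta m)=\theta^{\,k-1}m\,\sigma(\theta)=\bigl(\theta^{\,k-1}m\bigr)\sigma(\theta)=m\,\sigma(\theta)^{k-1}\sigma(\theta)=m\,\sigma(\theta)^{k}.
\]
The only thing used here beyond the base case is that the factor $\sigma(\theta)$ produced when commuting one $\theta$ past $m$ already sits to the right of $m$, so it can simply be left in place while the remaining $\theta^{k-1}$ is moved across by the induction hypothesis. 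Since $\sigma(\theta)^{k}$ is precisely the monomial $\theta^{k}$ with $\theta$ replaced by $\sigma(\theta)$, this is the claimed identity for $f=\theta^{k}$.

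A slightly more conceptual way to phrase the same argument, which also makes clear why no real obstruction arises, is to note that the assignment sending $f(\theta)$ to the unique $g(\theta)$ with $f(\theta)m=m\,g(\theta)$ is a $\KK$-algebra endomorphism of $\KK[\theta]$: additivity is clear, and multiplicativity follows from $f_1(\theta)f_2(\theta)m=f_1(\theta)m\,g_2(\theta)=m\,g_1(\theta)g_2(\theta)$. Such an endomorphism is determined by the image of the generator $\theta$, and Lemma~\ref{lem:rewriteKTheta} identifies that image as $\sigma(\theta)$; hence the endomorphism is $f\mapsto f\circ\sigma$, which is exactly the corollary. The one point requiring a word of care — and the mildest of obstacles — is well-definedness of this endomorphism, i.e. that $f(\theta)m$ indeed lies in $m\,\KK[\theta]$; this is guaranteed by the base case together with the induction above, so the computational and conceptual routes close each other off.
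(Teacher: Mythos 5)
Your proposal is correct, and it is exactly the argument the paper has in mind: the paper states the corollary without an explicit proof, presenting it as the evident extension of Lemma~\ref{lem:rewriteKTheta} from $\theta$ to arbitrary $f(\theta)$, which is precisely your reduction by $\KK$-linearity to monomials $\theta^k$ followed by induction on $k$ using the lemma as the base case. Your induction step is sound (the factor $\sigma(\theta)$ correctly stays to the right of $m$ by associativity), and the endomorphism reformulation is a nice, equally valid packaging of the same computation.
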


Those are the basic tools we need to explain our approach for
factoring homogeneous polynomials in the first Weyl and the first
$q$-Weyl algebra.

For the complexity discussion, let us define some
constants we will utilize in order to estimate the operations needed
to perform our methods.

\begin{definition}
  \label{def:complexConstants}

  Let us denote by
  $\omega_q(n,c)$, for $n,c \in \NN_0$,
  the number of bit operations that an algorithm for factoring a polynomial of
  degree $n$ in a univariate polynomial ring over $\KK(q)$, where each
  coefficient has at most bit-size $c$, needs to
  perform.

  We denote for $n,c \in \NN_0$ by
  $\rho_q(n,c)$
  the number of bit operations needed to multiply 
  two polynomials
  in a univariate polynomial ring over $\KK(q)$,  where each
  polynomial has degree at most $n$ and where $c$ is the
  maximal bit size of each coefficient in the two polynomials.

  We will write $\mathcal{S}_q(n,k,c,\sigma)$, $n,c\in
  \NN_0, k \in \ZZ$, $\sigma \in \mathrm{Aut}(\KK[x])$, for the number of
  bit operations needed for
  computing $f(\sigma^k(x))$ for a polynomial $f$ in $\KK(q)[x]$ of degree $n$, where $x$ is an
  indeterminate and transcendental over $\KK(q)$ and each coefficient
  of $f$ has at most bit-size $c$.  

  

  If we deal with the case $q=1$, we will omit writing the subscript.
\end{definition}

For a detailed complexity discussion, we need to specify the expected
output of our factorization algorithms.

\begin{definition}
  \label{def:factorizationNormalize}
  Let $A$ be a polynomial algebra over a field $\KK$ and $f\in
  A\setminus \KK$ be a 
  polynomial.  For a fixed totally ordered monomial $\KK$-basis of $A$,
  the leading coefficient $\lc(f)$ of $f$ is uniquely defined. A {\bf
    nontrivial factorization} of $f$ is a tuple $(c,f_1,\ldots,f_m)$,
  where $c \in \KK\setminus\{0\}$, $f_1,\ldots,f_m \in
  A\setminus\{1\}$ are \textbf{monic} (i.e. they satisfy $\lc(f_i)=1$)
  and $f= c \cdot f_1 \cdots f_m.$

  By a slight abuse of notation, we may omit the first element in the
  tuple if $c=1$.
\end{definition}





The following lemma will provide a complexity estimate of the cost 
of testing whether a polynomial in $Q_1$ resp. $A_1$ is homogeneous.

\begin{lemma}
  \label{lem:cheapHomogTest}
  In order to determine whether a polynomial $p \in Q_1$ resp. $p \in
  A_1$ is homogeneous, it requires $\#\{\text{Terms in } p\}$ integer
  additions and comparisons.
\end{lemma}
\begin{proof}
  A polynomial $p$ is homogeneous with respect to our definition if
  and only if in every term the difference between the degree in $x$
  and the degree in $\partial$ is the same. Hence our statement follows.
\end{proof}

Graded elements 
enjoy numerous nice properties, in particular regarding factorizations.
\begin{lemma}
\label{GradedFctr}
Let $(\Gamma,+)$ be a monoid, totally ordered by $<$, such
that $a < b \Rightarrow a+c < b+c$ for all $a,b,c\in \Gamma$.
Moreover, let $D$ be a domain over a field $\KK$, nontrivially graded by $\Gamma$, that is $D = \oplus_{\gamma \in \Gamma} D_{\gamma}$ for $\KK$-vector spaces $D_{\gamma}$ and $D_{\alpha} \cdot D_{\beta} \subseteq D_{\alpha+\beta}$ holds $\forall \alpha, \beta \in \Gamma$.

Consider $d \in D\setminus\{0\}$.  
If there is $m\geq 1$ and $d_i \in D$, such that 
$d = d_1 \cdot \ldots \cdot d_m$, then 
$d$ is $\Gamma$-graded 
if and only if
$d_1, \ldots, d_m$ are $\Gamma$-graded.
\end{lemma}
\begin{proof}
The $\Leftarrow$ direction follows by the definition of grading, so it remains to
prove the $\Rightarrow$ direction. For an element $f \in D\setminus\{0\}$, let us denote by $\alpha(f) \in \Gamma$ resp. by $\omega(f) \in \Gamma$ the degree of the highest resp. the lowest nonzero graded part of $f$. Note, that $\omega(f) \leq \alpha(f)$. Thus $f = f_{\alpha(f)} + \ldots + f_{\omega(f)}$ and, moreover, $f$ is graded if and only if $f = f_{\alpha(f)} = f_{\omega(f)}$. 

Suppose $d=bc$, where $b=b_{\alpha(b)} + \ldots + b_{\omega(b)}$ and 
$c=c_{\alpha(c)} + \ldots + c_{\omega(c)}$. 
Then $bc = b_{\alpha(b)} c_{\alpha(c)} + \ldots +b_{\omega(b)}c_{\omega(c)}$ is the graded decomposition of $d=bc$,
and $(bc)_{\alpha(bc)} = (bc)_{\alpha(b) + \alpha(c)} = b_{\alpha(b)} c_{\alpha(c)}$ since $D$ is a domain. Analogously $(bc)_{\omega(bc)} =b_{\omega(b)}c_{\omega(c)}$. Since $d=bc$ is graded one has thus
$\alpha(bc)=\omega(bc)$, that is $\alpha(b)+ \alpha(c) = \omega(b) + \omega(c)$. Together with $\alpha(b) \geq \omega(b), \alpha(c)  \geq \omega(c)$ this delivers $\alpha(b) = \omega(b)$ and $\alpha(c) = \omega(c)$,
proving the claim.
\end{proof}




\section{A New Approach for Factoring Homogeneous Polynomials in the First ($q$-)Weyl Algebra}

The main idea of our factorization technique lies in the reduction to
a commutative univariate polynomial subring of $A_1$ resp. $Q_1$,
namely $\KK[\theta]$.
We will show that there are only two monic irreducible elements in $\KK[\theta]$, that are
reducible in $A_1$ resp. $Q_1$. Hence, factoring graded elements in $A_1$ (which have a representation as
  $\KK[\theta]$ modules) can be
reduced to factoring in $\KK[\theta]$, identifying these two
elements in a given list of factors, and interchanging using
commutation rules.

We will start with discussing how to find one factorization of a
given homogeneous polynomial, which, in the process, also leads us to the answer of
the question how to find all possible factorizations.



\subsection{Factoring homogeneous polynomials of degree zero}
\label{sbsctn:facDegZero}

The following lemma shows that we can rewrite every homogeneous
polynomial of degree zero in $A_1$ resp. $Q_1$ as a polynomial in $\KK[\theta]$.

\begin{lemma}[Compare with \cite{SaStuTaka:2000}, Lemma 1.3.1]
\label{lem:homogToTheta}
In $A_1$, we have the following identity for $n \in \NN$:
$$x^n\partial^n = \prod_{i = 0}^{n-1} (\theta - i).$$
In $Q_1$, one can rewrite $x^n\partial^n$ as element in $\KK[\theta]$
and it is equal to
$$\frac{1}{q^{T_{n-1}}}\prod_{i=0}^{n-1}\left(\theta -
 \sum_{j=0}^{i-1} q^j\right) = \frac{1}{q^{T_{n-1}}}\prod_{i=0}^{n-1}\left(\theta
 - [i]_q\right),$$
where $T_i$ denotes the $i$th triangular number
$\sum_{j=0}^i j = \frac{i(i+1)}{2}$
for all $i \in \NN_0$.
\end{lemma}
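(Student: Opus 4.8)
The plan is to prove both identities by induction on $n$. The base case $n = 1$ is immediate: in both algebras the right-hand side collapses to $\theta$, since the empty sum gives $[0]_q = 0$ (and likewise the non-$q$ shift is $0$) while $T_0 = 0$, so the single factor is $\theta - 0 = \theta = x\partial$. For the inductive step I would peel off one factor of $x$ on the left and one factor of $\partial$ on the right, writing $x^{n+1}\partial^{n+1} = x\,(x^n\partial^n)\,\partial$ and substituting the inductive hypothesis into the middle factor. This produces an expression of the shape $x\,f(\theta)\,\partial$ for a known $f \in \KK[\theta]$ (times a power of $q$ in the $Q_1$ case).

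The key move is to push $\partial$ leftward past $f(\theta)$ using the $n = 1$ instance of Corollary~\ref{cor:thetaswap}: this rewrites $f(\theta)\partial$ as $\partial\,f(\theta - 1)$ in $A_1$, and as $\partial\,f\!\left(\frac{\theta-1}{q}\right)$ in $Q_1$ (note that for $n=1$ the term $\frac{q^{-n+2}-q}{1-q}$ vanishes, which is what makes the $n=1$ swap so clean). After this swap the surviving $x$ and the freed $\partial$ recombine as $x\partial = \theta$, so the whole product becomes a single polynomial in $\theta$ whose argument has merely been shifted. In $A_1$ the shift is by $1$, giving $f(\theta-1) = \prod_{i=0}^{n-1}\bigl(\theta-(i+1)\bigr) = \prod_{k=1}^{n}(\theta-k)$, and multiplying by the surviving factor $\theta = \theta - 0$ reassembles $\prod_{k=0}^{n}(\theta-k)$, closing the induction.

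The $Q_1$ case runs along exactly the same lines, and this is where I expect the only genuine work to lie, since the bookkeeping is more delicate. First I would verify the $q$-arithmetic identity $1 + q\,[i]_q = [i+1]_q$, which shows that the substitution $\theta \mapsto \frac{\theta-1}{q}$ carries each factor $\theta - [i]_q$ to $\frac{1}{q}\bigl(\theta - [i+1]_q\bigr)$; reindexing then turns the product $\prod_{i=0}^{n-1}$ into $\prod_{k=1}^{n}$, and the surviving $\theta = \theta - [0]_q$ extends it back down to $\prod_{k=0}^{n}(\theta - [k]_q)$. Finally I would collect the prefactors: the $n$ copies of $\frac{1}{q}$ from the substitution combine with the $\frac{1}{q^{T_{n-1}}}$ supplied by the hypothesis to give $\frac{1}{q^{T_{n-1}+n}}$, so the last step is simply to confirm the triangular-number identity $T_{n-1} + n = T_n$, matching the claimed exponent. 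Thus the entire lemma reduces to a one-line induction resting on Corollary~\ref{cor:thetaswap}, with the two elementary identities $1 + q[i]_q = [i+1]_q$ and $T_{n-1} + n = T_n$ as the only non-routine ingredients.
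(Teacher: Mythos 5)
Your proof is correct: the base case, the clean $n=1$ instance of Corollary~\ref{cor:thetaswap} (where the correction term $\frac{q^{-n+2}-q}{1-q}$ indeed vanishes), the identity $1+q\,[i]_q=[i+1]_q$, and the exponent bookkeeping $T_{n-1}+n=T_n$ all check out. The paper itself gives no proof of this lemma, simply citing \cite{SaStuTaka:2000}, and your induction via the commutation rules is precisely the standard argument the paper's setup suggests (it recommends exactly such inductions for Lemma~\ref{lem:rewriteKTheta}), so there is no divergence to report.
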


Therefore the factorization of a homogeneous polynomial $p$ of degree zero
can be done by rewriting $p$ as element in $\KK[\theta]$ and
factor it in $\KK[\theta]$, which is for practical choices of $\KK$ well
implemented in every computer algebra system.

Of course, this would not be a complete factorization, as there are
still elements irreducible in $\KK[\theta]$, but reducible in $Q_1$
resp. $A_1$. An obvious example is $\theta$ itself. Fortunately, there are only two monic polynomials irreducible in
$\KK[\theta]$, but reducible in $A_1$ resp. $Q_1$. This is shown by
Lemma \ref{lem:thetairred}, which requires the following proposition
for its proof.

\begin{proposition}
  \label{prop:A0completeWeylDescription}
  $Q_1^{(0)}$ resp. $A_1^{(0)}$ is a $\KK$-algebra, generated by
  the element $\theta := x\partial$. The graded direct summands
  $Q_1^{(k)}$ resp. $A_1^{(k)}$ are cyclic $Q_1^{(0)}$
  resp. $A_1^{(0)}$ bi-modules generated by the
  element $x^{-k}$, if $k<0$, or by $\partial^k$, if $k>0$.
\end{proposition}
\begin{proof}
  The first statement can be seen using Lemma \ref{lem:homogToTheta},
  as we can identify $Q_1^{(0)}$ resp. $A_1^{(0)}$ with $\KK[\theta]$.

  For the second statement recall that being homogeneous of degree $k
  \in \ZZ$ for a polynomial $p \in
  Q_1^{(k)}$ resp. $p \in A_1^{(k)}$  means, that every monomial is -- for a certain $n \in \NN_0$ -- of the form
  $x^n\partial^{n+k}$, if $k \geq 0$, or of the form
  $x^{n-k}\partial^n$, if $k < 0$. Since we can transform $x^n\partial^n$
  into an expression in $\KK[\theta]$ via Lemma \ref{lem:homogToTheta}
  and use the commutation rules in Lemma \ref{lem:rewriteKTheta}, we
  can move $x^{-k}$ resp. $\partial^k$ to the right and the
  left and hence obtain the
  desired bi-module structure.
\end{proof}

\begin{lemma}
\label{lem:thetairred}
The polynomials $\theta$ and $\theta + \frac{1}{q}$ are the only
irreducible monic elements in $\KK[\theta]$ that are reducible in
$Q_1$.
For $A_1$, the polynomials $\theta $ and $\theta +1$ are the only
irreducible monic elements in $\KK[\theta]$ that are reducible in $A_1$.
\end{lemma}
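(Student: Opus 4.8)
The plan is to prove both inclusions. First I would dispatch the easy direction, that the listed polynomials really are reducible. In $A_1$ one has $\theta = x\partial$ and, using $\partial x = x\partial + 1$, also $\theta + 1 = \partial x$; in $Q_1$ one has $\theta = x\partial$ and, from $\partial x = qx\partial + 1 = q\theta + 1$, the identity $\theta + \tfrac1q = \tfrac1q\,\partial x$. Since it is well known that the only units of $A_1$ and $Q_1$ are the nonzero scalars (so that neither $x$ nor $\partial$ is a unit), each of these displays is a genuine product of two non-units, hence witnesses reducibility.

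For the converse I would start from a monic $p(\theta)\in\KK[\theta]$ that is irreducible in $\KK[\theta]$ but reducible in $A_1$, and write $p = fg$ with $f,g$ non-units. The crucial first step is to observe that $p$ is homogeneous of degree $0$ and that $A_1$ is a $\ZZ$-graded domain; comparing the highest- and lowest-degree homogeneous components of $f$ and $g$ then shows that any factorization of a homogeneous element has homogeneous factors, with $\deg f + \deg g = 0$. Writing $k:=\deg f$, I would use Lemma~\ref{lem:homogToTheta} to put the factors into $\theta$-normal form: every homogeneous element of degree $k>0$ equals $a(\theta)\partial^{k}$ and every one of degree $-k$ equals $x^{k}b(\theta)$ for suitable $a,b\in\KK[\theta]$ (and symmetrically when $k<0$), since each monomial $x^{i}\partial^{i+k}$ is $(x^{i}\partial^{i})\partial^{k}$ with $x^{i}\partial^{i}\in\KK[\theta]$.

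The case analysis then finishes the argument. If $k=0$ both factors lie in $\KK[\theta]$, so $p=fg$ would be a nontrivial factorization of $p$ inside $\KK[\theta]$, contradicting irreducibility. If $k>0$, then $p = a(\theta)\,\partial^{k}x^{k}\,b(\theta) = a(\theta)b(\theta)c(\theta)$ in the commutative ring $\KK[\theta]$, where $c(\theta)=\partial^{k}x^{k}=\prod_{i=1}^{k}(\theta+i)$ (computed exactly as in Lemma~\ref{lem:homogToTheta}). As $c$ has degree $k\ge 1$ and $\KK[\theta]$ is a unique factorization domain, irreducibility of $p$ forces $a,b$ to be constants and $p$ to be the monic associate of $c$; since $\prod_{i=1}^{k}(\theta+i)$ splits into $k$ linear factors it is irreducible only for $k=1$, giving $p=\theta+1$. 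The case $k<0$ is symmetric: writing $k=-\ell$ and using $x^{\ell}\partial^{\ell}=\prod_{i=0}^{\ell-1}(\theta-i)$ together with Corollary~\ref{cor:thetaswap} to collect everything in $\KK[\theta]$, irreducibility forces $\ell=1$ and $p=\theta$.

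The main obstacle, and the step I would be most careful about, is the homogeneity reduction: one must justify that the factors of the degree-$0$ element $p$ are themselves homogeneous (which rests on $A_1$ being a domain) and then correctly pass from the noncommutative product $fg$ to a commutative product in $\KK[\theta]$ through the explicit $\theta$-normal forms. Once this is in place, the $Q_1$ statement follows by the very same scheme, replacing $\prod_{i=1}^{k}(\theta+i)$ and $\prod_{i=0}^{\ell-1}(\theta-i)$ by their $q$-deformed counterparts from Lemma~\ref{lem:homogToTheta} and Corollary~\ref{cor:thetaswap}; the degree-$1$ factors that survive are then $\theta$ (from $x\partial$) and $\theta+\tfrac1q$ (from $\partial x = q\theta+1$).
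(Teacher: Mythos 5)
Your proof is correct and follows essentially the same route as the paper's: reduce a hypothetical factorization of the irreducible $p\in\KK[\theta]$ to homogeneous factors in $\theta$-normal form, collapse the noncommutative product into $\KK[\theta]$ via the product formula for $\partial^k x^k$ (resp.\ $x^k\partial^k$, Lemma~\ref{lem:homogToTheta} and Corollary~\ref{cor:thetaswap}), and use unique factorization in $\KK[\theta]$ to force $k=1$, yielding exactly $\theta$ and $\theta+1$ (resp.\ $\theta+\frac{1}{q}$). The only differences are improvements in rigor: you also verify the easy direction (that the listed elements really are reducible, using that the units are the nonzero scalars) and you explicitly justify, via the graded-domain argument comparing extreme homogeneous components, that the factors must be homogeneous with $k=0$ excluded --- steps the paper's proof asserts without comment.
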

\begin{proof}
We will only consider the proof for $Q_1$, as the proof for $A_1$ is
done in an analogue way. Let $f \in \KK[\theta]$ be a monic polynomial. Assume that it is
irreducible in $\KK[\theta]$, but reducible in $Q_1$. Let
$\varphi,\psi$ be elements in $Q_1$ with $\varphi\psi = f$. Then
$\varphi$ and $\psi$ are homogeneous and $\varphi \in Q_1^{(-k)}, \psi
\in Q_1^{(k)}$ for a $k\in \ZZ\setminus\{0\}$. As for the case where $k$
is negative a similar argument is applicable, we assume without loss of
generality that $k$ is positive.

Due to Proposition \ref{prop:A0completeWeylDescription}, we have for some $\tilde \varphi, \tilde \psi \in \KK[\theta]$
$$
  \varphi = \tilde \varphi(\theta) x^k, \qquad \psi =  \tilde \psi (\theta) \partial^k.
$$

\noindent
Using Corollary \ref{cor:thetaswap}, we obtain
$$f = \tilde \varphi (\theta) x^k \tilde \psi (\theta) \partial^k = \tilde
\varphi (\theta) x^k\partial^k\tilde \psi \left(\frac{1}{q}\left(
      \frac{\theta-1}{q^{n-1}} - \frac{q^{-n+2}-q}{1-q}\right)\right).$$

\noindent
As we know from Lemma \ref{lem:rewriteKTheta} the equation
$$x^k\partial^k =\frac{1}{q^{T_{k-1}}}\prod_{i=0}^{k-1} \Bigl( \theta -
 \sum_{j=0}^{i-1} q^j\Bigr)$$
holds.

Thus, because we assumed $f$ to be irreducible in $\KK[\theta]$, we
must have $\tilde \varphi,\tilde \psi \in \KK$ and $k = 1$ due to
Lemma \ref{lem:rewriteKTheta}. Because
$f$ is monic, we must also have $\tilde\varphi = \tilde\psi^{-1}$.

As a result, the only possible $f$ is $f=\theta$. If we
originally had chosen $k$ to be negative, the only possibility
for $f$ would be
$f = \theta + \frac{1}{q}$. This completes the proof.
\end{proof}

Therefore, we have a procedure
for factoring a homogeneous polynomial $p \in A_1$ (resp. $p \in Q_1$) of
degree zero in $\KK[\theta]$. It is done using the following steps.

\begin{enumerate}[topsep=2pt,itemsep=0pt,parsep=2pt]
  \item Rewrite $p$ as an element in $\KK[\theta]$;
  \item Factorize $p$ in $\KK[\theta]$ using
    commutative methods, i.e. obtain a list $[c,p_1,\ldots, p_\ell] \in
    \KK[\theta]^\ell$, $\ell \in \NN$, $c\in\KK$, where $c\cdot p_1\cdots p_\ell = p$.
  \item For every $p_j$, $j \in \{1,\ldots, \ell\}$, that is equal to
    $\theta$ or $\theta+1$ (resp. $\theta + \frac{1}{q}$),
    remove $p_j$ from the list and insert into position $j$ and $j+1$
    the elements
    $x_i, \partial_i$ resp. $\partial_i, x_i$.
  \item  Replace for every element in the list from the previous step
    $\theta$ by $x\cdot \partial$. Return
    the resulting list.
\end{enumerate}

  Let us consider the complexity of the above steps to factor a
  homogeneous element of degree zero in $A_1$.

  \textbf{Ad step 1:} The polynomial $p$ has, due to the assumption of
  being homogeneous of degree zero, the form
  \begin{align}
  \label{eq:complexityP}
  p = \sum_{i = 0}^{n}p_ix^i\partial^i, n \in \NN, p_i \in \KK \text{
    (resp. } \KK(q) \text{)}.
  \end{align}
  In order to transform it into an element in $\KK[\theta]$, we have to
  apply the rewriting rule stated in Lemma \ref{lem:homogToTheta} for every term
  $x^i\partial^i$ in $p$. For that, one makes use of the identity
$$x^{n+1}\partial^{n+1} = x^n\partial^n \cdot (\theta - n).$$
  
  Thus, in order to perform step 1, we need to perform for every $i \in \underline{n}$ 
  a multiplication of a polynomial in $\KK[\theta]$
  of degree $i$ with a polynomial of degree $1$.

  \textbf{Ad step 2:} Unfortunately, the factorization problem even in
  the univariate case does not have polynomial complexity in
  general. One might face exponential complexity with respect to the
  bit-length of the coefficients in $\KK$ or it might even be undecidable, depending on the choice of
  $\KK$.

  An example for a polynomial-time complexity with respect to the
  bit-length of the coefficients would be $\KK=\QQ$, due to the famous LLL algorithm by Lenstra, Lenstra,
  Lovasz developed in 1982 (\cite{LLL:1982}). For certain classes of
  fields, including algebraic ones, polynomial time algorithms have
  been discovered in \cite{chistov1986algorithm} and
  \cite{grigor1984factoring}. For further
  readings on the complexity of the factorization problem we also recommend
  \cite{Kaltofen:1982} and \cite{von2013modern}. As given in Definition
  \ref{def:complexConstants}, we simply write $\omega(n)$
  resp. $\omega_q(n)$ for the amount of bit operations needed for
  factoring a univariate polynomial of degree $n$.

  \textbf{Ad step 3:} In order to find and identify the polynomials,
  it does not require any operations on the polynomials other than
  comparisons.

  \textbf{Ad step 4:} For each monomial in each factor that has degree
  zero, we need to
  replace $\theta$ by $x\cdot \partial$ and bring it into normal form,
  i.e. each monomial in the end must have the form $x^i\partial^i$ for $i \leq n$.
  This can be calculated, up to a constant factor, with the same number
  of operations as performed for step 1, since we only need to reverse
  the mapping outlined there.

  Thus, we can formulate the following corollary.

  \begin{corollary}
    \label{cor:complexityHomogZero}
    Given $p$ as in (\ref{eq:complexityP}), and let $b$ be the maximal
    coefficient in $p$ with respect to its bit-size. In order to obtain one factorization  of $p$ over $Q_1$, it requires
    \begin{align}
      \label{eq:oneFactorCost}
      O\left(n\cdot \rho_q(n,\lceil \log| n! | \rceil)+
      \omega_q(n,\lceil \log |b \cdot n!| \rceil )\right)
    \end{align}
    bit operations.
  \end{corollary}

\begin{example}
  Let $\KK:=\QQ$ and 
  $$p := x^3\partial^3+4x^2\partial^2+3x\partial \in A_1.$$
  Clearly $p$ is homogeneous of degree zero; rewritten in $\KK[\theta]$, one obtains
  $$p = \theta^3 + \theta^2 + \theta.$$
  This polynomial factorizes in $\KK[\theta]$ to $\theta \cdot
  (\theta^2 + \theta + 1)$, which further factorizes as $\theta$ is reducible to
  $x \cdot \partial \cdot (\theta^2 + \theta + 1) \in A_1$.
  To get more (in fact, as we will see in the next subsection, all) possible factorizations of $p$, we apply the commutation
  rules with $x$ resp. $\partial$ and obtain the following other
  factorizations:
  \begin{eqnarray*}
    (\theta^2 + \theta + 1) \cdot x \cdot \partial,\\
    x \cdot (\theta^2 + 3\theta + 3) \cdot \partial.
  \end{eqnarray*}
\end{example}



\subsection{Factoring homogeneous polynomials of arbitrary degree}
\label{sbsctn:facDegArb}

Fortunately, the hard work is already done and factoring of
homogeneous polynomials of arbitrary degree is just a small further
step.

The reason is Proposition \ref{prop:A0completeWeylDescription}, which
leads to the following steps to obtain one factorization of a homogeneous polynomial $p \in
  Q_1^{(k)}$ resp. $p \in A_1^{(k)}$ of degree $k \in \ZZ$.
\begin{enumerate}
  \item Represent $p$ as $\tilde p x^{-k}$ resp. $\tilde
    p \partial^{k}$, where $\tilde p$ in $A_1^{(0)}$, written as
    polynomial in $\KK[\theta]$. We need
    $O(d^2\cdot \rho_q(d,\lceil \log |b\cdot d!|\rceil))$, where $d := \min\{\deg_x(p),
    \deg_\partial (p)\}$ and $b\in \KK$ denotes the maximal
    coefficient in $p$ with respect to the bit-size, operations to obtain this $\tilde p$.
    Afterwards, if $k<0$, one additional application of a $k$-shift to $\tilde
    p$ is required.
  \item Factorize $\tilde p$ -- which is homogeneous of
    degree zero -- using the
    steps shown in the previous subsection.
\end{enumerate}

Now we have everything we need to formulate an algorithm to find one
factorization of a homogeneous element in $A_1$ resp. $Q_1$, namely
Algorithm \ref{alg:homogfac} which can be found below.

  The
  next corollary states a complexity estimate Algorithm \ref{alg:homogfac}. The
  proof is straightforward and left to the reader.

  \begin{corollary}
    \label{cor:complOneFact}
    Let $p \in Q_1$ be homogeneous of degree $k\in \ZZ$, and let all
    the coefficients in $p$ have at most bit size $b \in \NN_0$. Then,
    due to Proposition \ref{prop:A0completeWeylDescription}, $p$ can be
    written in the form $p = p_0\varphi^{|k|}$, where $p_0$ is a
    polynomial of degree $n \in \NN_0$ in $\KK(q)[\theta]$ and $\varphi \in \{x,\partial\}$. Obtaining
    one factorization in $Q_1$ of $p$ requires
    $$O\left(n\cdot \rho_q(n,\lceil \log |n!| \rceil) +
    \omega_q(n,\lceil \log |b\cdot n!|)
    + \mathcal{S}_q(n,k,\lceil \log | b\cdot n!|\rceil, \sigma)\right)$$
    bit operations, where $\sigma(x) = x+1$ if $q=1$, and $\sigma(x) =
    q\cdot x
    +1$ otherwise.
\end{corollary}

\begin{algorithm}
	\caption{HomogFac: Factorization of a homogeneous polynomial
          in the first ($q$-)Weyl algebra}
\label{alg:homogfac}
        \begin{flushleft}
          \textit{Input:} $h \in A_1^{(m)}$ (resp. $h \in Q_1^{(m)})$, where $m \in \mathbb{Z}$ \\
          \textit{Output:} $(f_1,\ldots,f_n) \in A_1^n $
          resp. $(f_1,\ldots,f_n) \in Q_1^n$, such that $f_1 \cdot \ldots \cdot f_n = h$, $n \in \mathbb{N}$\\
          \textit{Assumption:} $h$ is normalized, i.e. the leading
          coefficient is 1. \\
        \end{flushleft}
	\begin{algorithmic}[1]
		\IF{$m \neq 0$}
			\IF{$m<0$}
				\STATE Get $\hat h \in A_1^{(0)}$ such that $h = \hat h x^{-m}$
				\STATE \textit{factor} $:=$ $(\underbrace{x,\ldots,x}_{-m \text{ times}})$
			\ELSE
				\STATE Get $\hat h$ such that $h = \hat h \partial^m$
				\STATE \textit{factor} $:=$ $(\underbrace{\partial,\ldots,\partial}_{m \text{ times}})$
			\ENDIF
		\ELSE
			\STATE $\hat h := h$
			\STATE \textit{factor} $:=$ $1$
		\ENDIF
		\STATE $(\hat f_1,\ldots, \hat f_l):=$ Factorization
                of $\hat h$ as element in $\KK[\theta]$ ($l \in \NN$)
		\STATE $(\hat{\hat{f_1}},\ldots,\hat{\hat{f_l}}) := $ Substitute $\theta$ by
                $x \cdot \partial$ in $(\hat f_1,\ldots,\hat f_n)$
                \STATE $\textit{result} := ()$
		\FOR{$i$ from 1 to $l$}\label{ln:beginDecomposeTheta}
                  \IF{$\hat{\hat{f_i}} = x \cdot \partial$}
                    \STATE Append $x$ and $\partial$ to \textit{result}
                  \ELSE
                    \IF{$\hat{\hat{f_i}} = \partial \cdot x$}
                       \STATE Append $\partial$ and $x$ to \textit{result}
                    \ELSE
                       \STATE Append $\hat{\hat{f_i}}$ to
                       \textit{result}
                     \ENDIF
                  \ENDIF
                \ENDFOR \label{ln:endDecomposeTheta}
                \STATE Append each element in \textit{factor} to \textit{result}
		\RETURN{\textit{result}}
	\end{algorithmic}
\end{algorithm}

We also would like to address the topic how to obtain all possible
factorizations of a homogeneous polynomial. As mentioned before, the factorization of a polynomial in a
noncommutative ring is generally not unique in the classical sense, i.e. up to multiplication by units or
up to interchanging factors. Thus several different factorizations can
occur. For the homogeneous case, they can fortunately be easily
characterized by the commutation rules from Lemma
\ref{lem:rewriteKTheta} and the identities from Lemma \ref{lem:thetairred}.
This is proven by the following Lemma.

\begin{lemma}
Let $z \in \ZZ$ and $p \in A_1^{(z)}$, resp. $p \in Q_1^{(z)}$, is
  monic.  Suppose, that one factorization of $p$ has been constructed
  following Proposition \ref{prop:A0completeWeylDescription} and has
  the form $Q(\theta) \cdot T(\theta) \cdot \psi^{|z|}$, where
  \begin{itemize}
    \item $T(\theta) = (x \partial)^{t} (\partial x)^{s}$, $t,s \in \NN_0$,
  is a product of irreducible factors in $\KK[\theta]$, which are
  reducible in $A_1$, resp. $Q_1$,
    \item $Q(\theta)$ is the product of irreducible factors
  in both $\KK[\theta]$ and $A_1$ (resp. $Q_1)$, and
    \item $\psi = x$, if $z<0$, and $\psi = \partial$ otherwise.
  \end{itemize}
  
  Let $p_1 \cdots p_m$ for $m\in \NN$ be another nontrivial
  factorization of $p$. Then this factorization can be derived from
  $Q(\theta) \cdot T(\theta) \cdot \psi^{|z|}$ by using two
  operations, namely (i) ``swapping'', that is interchanging two
  adjacent factors according to the commutation rules and (ii)
  ``rewriting'' of occurring $\theta$ resp. $\theta +1$ 
    ($\theta + \frac{1}{q}$ in the $q$-Weyl case) by
    $x \cdot \partial$ resp. $\partial \cdot  x$.
\end{lemma}
\vspace*{-7pt}
\begin{proof}
Since $p$ is homogeneous, all $p_i$ for $i \in \underline{m}$ are
 homogeneous. Thus 
 each of them can be written in the form
 $p_i = \tilde p_i(\theta)\cdot \psi_{e_i}$, where
 $e_i \in \ZZ$, and $\psi_{e_i} = x^{-e_i}$, if $e_i<0$ and $\psi_{e_i} = \partial^{e_i}$ otherwise. With respect to the commutation
 rules as stated in Corollary \ref{cor:thetaswap}, we can swap the
 $\tilde p_i(\theta)$ to the left for any $2 \leq i \leq m$. 
 Note that it is possible for
 them to be transformed to the form $\theta$ resp. $\theta + 1$
 ($\theta + \frac{1}{q}$ in the $q$-Weyl case), after performing these swapping steps.  I.e., we have
 commuting factors, both belonging to $Q(\theta)$, as well as to
 $T(\theta)$ at the left. Our resulting product is thus $\tilde
 Q(\theta) \tilde T(\theta) \prod_{j=1}^m \psi_{e_j}$, where the factors in $\tilde Q(\theta)$,
 resp. $\tilde T(\theta)$, contain a subset of the factors of
 $Q(\theta)$ resp. $T(\theta)$. By our assumption of $p$ having
 degree $z$, we are able to swap $\psi_z$ to the right in $F
 :=\prod_{j=1}^m \psi_{e_j}$, i.e., $F = \tilde F
 \psi_z$ for $\tilde F \in A_1^{(0)}$. This step may involve
 combining $x$ and $\partial$ to $\theta$ resp. $\theta
 + 1$ ($\theta + \frac{1}{q}$ in the $q$-Weyl case). Afterwards, this is also done to the
 remaining factors in $\tilde F$ that are not yet polynomials in
 $\KK[\theta]$ using the swapping operation. These polynomials are
 the remaining factors that belong to $Q(\theta)$, resp. $T(\theta)$,
 and can be swapped commutatively to their respective positions. Since
 reverse engineering of those steps is possible, we can derive the
 factorization $p_1 \cdots p_m$ from $Q(\theta) \cdot T(\theta)
 \cdot \psi_{z}$ as claimed.
\end{proof}

With the help of the above lemma, we are also able to formulate an
algorithm to find all factorizations of a given homogeneous polynomial
in $A_1$, namely Algorithm \ref{alg:homogfac_all} as stated below.

\begin{algorithm}
	\caption{HomogFacAll: All factorizations of a homogeneous
          polynomial in the first ($q$-)Weyl algebra}
\label{alg:homogfac_all}
        \begin{flushleft}
          \textit{Input:} $h \in A_1^{(m)}$ (resp. $h \in Q_1^{(m)}$), where $ m \in \mathbb{Z}$ \\
          \textit{Output:} $\{(f_1,\ldots,f_n) \in A_1^n \mid f_1 \cdot \ldots \cdot f_n = h, n \in \mathbb{N}\}$\\
          \textit{Assumption:} $h$ is normalized, i.e. the leading
          coefficient is 1.\\
        \end{flushleft}
	\begin{algorithmic}[1]
          \STATE $(f_1,\ldots,f_\nu, g,\ldots,g)$ := HomogFac($h$)
          without lines \ref{ln:beginDecomposeTheta} -- \ref{ln:endDecomposeTheta}\\
          \COMMENT{$\nu \in \mathbb{N}_0, g \in \{x,\partial\}, f_i \in A_1^{(0)}$}
          \STATE Rewrite each $f_i$ as element in $\mathbb{K}[\theta]$
          \STATE \textit{result} $:=$ $\{$Permutations of $(f_1,\ldots,f_\nu,g,\ldots,g)$ with respect to the commutation rules$\}$
          \FOR{$(g_1,\ldots,g_n) \in $ \textit{result}}
            \FOR{$i$ from $1$ to $n$}
               \IF{$g_i = \theta$}
                 \STATE $g_i := x,\partial$
                 \STATE \textit{leftpart} $:=$ $\{(g_1,\ldots,g_{k},
                 x, g_{k+1}(\theta+1), \ldots,  g_{i-1}(\theta+1)) \mid
                 k\leq i-1, g_j \in A_1^{(0)} \text{ for all } k<j\leq i-1\}$
                 \STATE \textit{rightpart} $:=$ $\{(g_{i+1}(\theta+1),\ldots,
                 g_{k-1}(\theta+1), \partial , g_{k} ,\ldots, g_n)
                 \mid k \geq i+1, g_j \in A_1^{(0)} \text{ for all }
                 i+1 \leq j < k\}$
                 \STATE Append each element in $\{(l_1, \ldots, l_j,
                 r_1 \ldots r_k) \mid (l_1, \ldots, l_j) \in
                 \textit{leftpart}, (r_1, \ldots, r_k) \in
                 \textit{rightpart} $ for $ j,k \in \NN\}$ to \textit{result}.
               \ENDIF
               \IF{$g_i = \theta + 1$ (resp. $g_i = \theta+\frac{1}{q}$)}
                 \STATE $g_i := \partial,x$
                 \STATE \textit{leftpart} $:=$ $\{(g_1,\ldots,g_{k},
                 \partial, g_{k+1}(\theta-1), \ldots,  g_{i-1}(\theta-1)) \mid
                 k\leq i-1, g_j \in A_1^{(0)} \text{ for all } k<j\leq i-1\}$
                 \STATE \textit{rightpart} $:=$ $\{(g_{i+1}(\theta-1),\ldots,
                 g_{k-1}(\theta-1), x , g_{k} ,\ldots, g_n)
                 \mid k \geq i+1, g_j \in A_1^{(0)} \text{ for all }
                 i+1 \leq j < k\}$
                 \STATE Append each element in $\{(l_1, \ldots, l_j,
                 r_1 \ldots r_k) \mid (l_1, \ldots, l_j) \in
                 \textit{leftpart}, (r_1, \ldots, r_k) \in
                 \textit{rightpart}$ for $ j, k \in \NN \}$ to \textit{result}.
               \ENDIF
            \ENDFOR
          \ENDFOR
          \RETURN{\textit{result}}
	\end{algorithmic}
\end{algorithm}

In order to discuss the complexity of finding all factorizations of a
homogeneous element in $A_1$ resp. $Q_1$, we need an upper bound on
the number of possible factorizations.

\begin{lemma}
\label{LemmaHomogFctrIsFinite}
Let $p = p_0\cdot \varphi^{k}$ be a homogeneous polynomial in $A_1$ resp. $Q_1$, where $k\in\NN$, $p_0 \in \KK[\theta]$ and $\varphi \in \{x,\partial\}$ . Furthermore let $n:=\deg_\theta(p_0)$. Then the number of
different factorizations of $p$ is at most
    $$n \cdot n!\cdot \binom{n+k}{k}.$$
\end{lemma}

\begin{proof}
Let us assume that $p_0$ decomposes in $\KK[\theta]$ into $\tilde n \in \NN$
factors, where $\tilde n \leq n$. As all of these factors commute, there are up to $\tilde n!$ different
possibilities to rearrange them. For every such arrangement of the
factors of $p_0$, we can place the $k$ available $\varphi$ at any
position (with applied shift to the respective factors of $p_0$),
which leads to $\binom{\tilde n+k}{k}$ possibilities each time. Finally, due
to Lemma \ref{lem:thetairred}, the linear factors of $p_0$ might split
into $x\partial$ resp. $\partial x$. This would add for each instance
at most $\tilde n$ new distinct factorizations.
As $p_0$ factors at most into linear factors, we can assume $\tilde n=
n$ and obtain the stated upper bound.
\end{proof}


\begin{remark}
\label{RemOnFFD}
In (\cite{BHL14}) we prove that in the case of the polynomial $n$th ($q$-)Weyl algebra, a nonzero polynomial has only finitely many different factorizations. In yet another recent paper (\cite{giesbrecht2014factoring}) we have developed an algorithm for computing all factorizations of a given polynomial in the $n$th ($q$-)Weyl algebra.
\end{remark}

The termination of Algorithms \ref{alg:homogfac} and \ref{alg:homogfac_all} is clear, as we only
iterate over finite sets. The correctness follows by our preliminary
work.


\begin{corollary}
    Given the denotations as in Corollary \ref{cor:complOneFact} By Lemma \ref{LemmaHomogFctrIsFinite}, the number of different factorizations of $p$ is bounded by
    $$n \cdot  n!\cdot \binom{n+|k|}{|k|}.$$

    In order to obtain all these different factorizations, it would require
    \begin{align*}O\Biggl( &n\cdot \rho_q(n,\lceil \log |n!|\rceil) +
    \omega_q(n,b + \lceil \log |n!| \rceil) \\&+
      \left(n^2 + n\cdot n!\cdot
        \binom{n+|k|}{|k|}\right)\mathcal{S}_q(n,1,\lceil \log |b\cdot
        n!|\rceil, \sigma)\Biggr)\end{align*}
    bit operations, where $\sigma(x) = x+1$ if $q=1$, and $\sigma(x) =
    q\cdot x
    + 1$ otherwise. 
    
  \end{corollary}

\subsection{Application to the Rational First Weyl Algebra}

In practice, one is often interested in ordinary differential equations 
over the field of rational functions in the indeterminate $x$.
We refer to the corresponding algebra of operators  as the \textbf{first rational Weyl algebra} and denote it as $B_1$. 
The commutation rules over $B_1$ are extended from those in $A_1$, that is 
$\partial g(x) = g(x) \partial + \tfrac{\partial g(x)}{\partial x}$
for $g(x)\in \KK(x)$. 

Unlike in the polynomial Weyl algebra, an infinite number of
nontrivial factorizations of an element is possible. The
easiest example is the polynomial $\partial^2 \in A_1$, 
having except $\partial\cdot \partial$ a family of nontrivial factorizations \mbox{$(\partial+ \frac{1}{x+c})(\partial - \frac{1}{x+c})$} for all $c \in \KK$ over $B_1$; the only factorization in $A_1$ is $\partial\cdot \partial$.  Thus, at first glance, the factorization problem in both
the rational and the polynomial Weyl algebras seems to be distinct in
general. But there are still many things in common.

The formalism of the \textbf{Ore localization} of a ring (cf. e.~g.~ \cite{Bueso:2003}) can be briefly recalled as follows.
Let $R$ be a domain and $\{0\} \subsetneq S\subset R$ be a multiplicatively closed {\bf Ore set} in $R$, i.~e. the {\bf Ore condition} holds for $S$ and $R$ (the condition will appear below). Then there exists a localized ring, denoted by $S^{-1}R$ together with 
the classical embedding $\iota: R \to S^{-1}R, r \mapsto 1^{-1} r$, 
such that $\iota(S) \subset S^{-1}R$ becomes invertible.
Note, that the presentation of a left fraction $s^{-1}r \in S^{-1}R$
via the tuple $(s,r)\in S\times R$ is by no means unique, but defines an equivalence class. 

Rational Weyl algebras can be recognized as Ore localizations of polynomial Weyl algebras with respect to the multiplicatively closed set $S:=\KK[x]\setminus\{0\}$, which can be proven to be an Ore set in $A_1$. Let us clarify the connection between factorizations in an algebra and in its Ore localization. 

\begin{lemma}
\label{thrm:liftingrationalfact}
Let $R$ be a domain and $S\subset R$ 
be an Ore set in $R$. 
Moreover, let $h$ be an element in $S^{-1}R\setminus\{0\}$. Suppose, that 
$h = h_1 \cdots h_m$, $m \in \NN$, $h_i \in S^{-1}R$ for $i \in \underline{m}$. Then there exists $q\in S$ and $\tilde h_1, \ldots, \tilde h_m \in R$, such that $qh = \tilde h_1\cdots \tilde h_m.$
\end{lemma}
\begin{proof}
Suppose that $h = h_1 h_2 = (s_1^{-1} r_1) \cdot (s_2^{-1} r_2)$ for $r_i \in R, s_i \in S$. Then by the Ore condition $\exists \hat{r}_1 \in R, \hat{s}_2 \in S$ such that $r_1 s_2^{-1} =  \hat{s}_2^{-1} \hat{r}_1$. Thus 
$h = s_1^{-1} \hat{s}_2^{-1} \hat{r}_1 r_2$ and for $q=\hat{s}_2 s_1 \in S$ 
and $\tilde{h}_1 = \hat{r}_1, \tilde{h}_2 = r_2 \in R$ one has 
$q h = \tilde{h}_1 \tilde{h}_2 \in R$. The rest follows by induction.
\end{proof}

Thus we can lift any factorization from the ring
$S^{-1}R$ to a factorization in $R$ by a left multiplication with an
element of $S$. 

\begin{example}
As it was mentioned before, in the first rational Weyl algebra one has
$\partial^2 = (\partial+ \frac{1}{x+c})(\partial - \frac{1}{x+c})$ for all $c \in \KK$. Let us fix $c$ and analyze the lifting. 
\[
(\partial+ (x+c)^{-1})(\partial - (x+c)^{-1}) = 
(x+c)^{-1} \cdot ((x+c)\partial+ 1) \cdot (x+c)^{-1} \cdot ((x+c)\partial - 1)
\]
Since $\partial \cdot (x+c) = (x+c)\partial+ 1$, one has $((x+c)\partial+ 1) \cdot (x+c)^{-1} = \partial$ and thus
\[
\partial^2 = (x+c)^{-1} \cdot  \partial  \cdot ((x+c)\partial - 1),
\]
from which we read off the corresponding factorization in the
polynomial first Weyl algebra
\[
(x+c) \cdot \partial^2 = \partial  \cdot  ((x+c)\partial - 1).
\]
In the notation of the preceding Lemma $q = x+c, \tilde{h}_1 = \partial, \tilde{h}_2 = ((x+c)\partial - 1)$. 
In particular, the infinite family of factorizations we started with does not propagate to the polynomial case: as we see, the parameter $c$ is present in the lifted polynomial $(x+c) \partial^2$. By our approach we can prove, that 
$(x+c) \cdot \partial^2 = \partial  \cdot  ((x+c)\partial - 1)$ are the only
factorizations of $x\partial^2+c\partial^2$ in $A_1$ for any $c \in \KK$.
\end{example}

\begin{proposition}
Let $U:= \{ r\in R \mid 1^{-1}r \in S^{-1}R$ is invertible $\} \subset R$. Then
\noindent 
1. $r\in U$ 
$\Leftrightarrow$ $\exists w\in R : wr\in S$.\\
2. If $S=\KK[x]\setminus\{0\}$ in $R\in \{A_1, Q_1\}$, to any factorization 
of a fraction $h\in S^{-1}R$ we can associate a factorization of $qh \in R$ into elements of $R$.\\
3. Let $1^{-1}r$  be an irreducible element in $S^{-1}R$. Then in any factorization $r = pq$, where $p,q\in R$ one has $p \in U$ or $q\in U$, 
i.~e~ in general $r$ is not irreducible in $R$.\\
4. If $r\in R$ is irreducible in $R$, in general $1^{-1}r$ is not irreducible in $S^{-1}R$.
\end{proposition}


Surprisingly, 
irreducible $[-1,1]$-homogeneous polynomials remain irreducible in the
 rational Weyl algebra, as the following Theorem shows.

\begin{theorem}
\label{thrm:irreducibleRationalWeyl}
Let $p$ be an irreducible $[-1,1]$-homogeneous polynomial in $A_1$. 
Then, in the first rational Weyl algebra $B_1$, $1^{-1}p$ is irreducible up to an invertible multiple.
\end{theorem}
\begin{proof}
The following monic homogeneous polynomials are irreducible in $A_1$:
\begin{enumerate}
\item $\partial$, which is also irreducible over $B_1$,
\item $x$, which is a unit in $B_1$,
\item a monic irreducible $p$ over $\KK[\theta]$, $p\notin\{\theta, \theta+1\}$.
\end{enumerate}
Therefore, the only interesting case is the third one. Now let $p$ be
a monic irreducible element in $A_1^{(0)} \setminus \{\theta, \theta+1\}$. From now on we identify $p$ with $1^{-1}p \in B_1$. 
Suppose, that $p \in F$ is nontrivially reducible over $B_1$, say $p = p_1\cdot p_2$ for $p_1, p_2 \in B_1 \setminus A_1$, both non-invertible, thus $\deg_{\partial}(p_1), \deg_{\partial}(p_2) \geq 1$ and therefore $\deg_{\partial}(p) \geq 2$.

By Lemma \ref{thrm:liftingrationalfact}, there exist $q \in \mathbb{K}[x]$, $\tilde p_1, \tilde p_2 \in A_1 \setminus  \mathbb{K}[x]$, such that $qp = \tilde p_1 \tilde p_2$.

\textbf{Case 1: $q = x^k$, $k \in \mathbb{N}$ (homogeneous
  attempt).}\\
Then all possible factorizations of $x^k \cdot p$ in $A_1$ are due to
Lemma \ref{lem:rewriteKTheta} of the form
$$x^{k-l}p(\theta -l)x^l, l\in \mathbb{N}_0, l \leq k.$$
As shifts of irreducible elements in a univariate commutative
polynomial ring $\KK[\theta]$ are irreducible (see e.~g. \cite{beachy2006abstract}, Section 4.2) and $\deg_{\partial}(p) \geq 2$, 
 we see that $\tilde p_1$ and $\tilde p_2$ as supposed above do not exist.

\textbf{Case 2: $q = \sum_{i=0}^n q_i x^i$, $n \geq 1$, $q_n \neq 0$; $q$ is not a
  single term:}\\
Note, that the product $qp$ in this case is not homogeneous with respect to the $[-1,1]$-grading. Let $m\in\NN, m< n$ be minimal, satisfying $q_m \neq 0$, then the sum in $qp = \sum_{i=m}^n q_i x^i p$ coincides with the graded decomposition of $qp$.

With notations from the proof of Lemma \ref{GradedFctr}, suppose 
that $\alpha(\tilde p_1) = \eta \in \ZZ$ and 
$\alpha(\tilde p_2) = \mu \in \ZZ$. Then
%
\[
q_m x^m p = (qp)_{\alpha(qp)} = (\tilde{p_1} \tilde p_2)_{\alpha(\tilde p_1 \tilde p_2)} =  (\tilde p_1)_{\eta} (\tilde p_2)_{\mu}.\] 
Since $q_m \neq 0$, we can proceed like in  Case 1,
where two kinds of factorization are possible. Let us first 
write $(\tilde p_1)_{\eta} = x^{m-l} p(\theta -l)$ for some $0 \leq l\leq m$
and $(\tilde p_2)_{\mu} = q_m x^{l}$, then $\deg_\partial(\tilde p_1) \geq \deg_\partial(\tilde p_1)_{\eta}= \deg_\partial(p) = \deg_\partial(qp) =
\deg_\partial(\tilde p_1 \tilde p_2) = 
\deg_\partial(\tilde p_1) + \deg_\partial(\tilde p_2)$, indicating that 
$\deg_\partial(\tilde p_2) = 0$ and $\deg_\partial(\tilde p_1) =\deg_\partial(p)$. That is, $\tilde p_2$ must be in $\KK[x]$ and therefore cannot be as supposed above.
The second case, where $\deg_\partial (\tilde p_2)_{\mu} = \deg_\partial(p)$
is analogous and thus the proof is completed.




\end{proof}


\section{Implementation and benchmarking}

We implemented the presented algorithms in \textsc{Singular}, and
since version 3-1-3 they are part of the distribution of
\textsc{Singular}. The following example shows how to use the library
containing them.

\begin{example}
  Let $h\in Q_1$ be the polynomial
\begin{eqnarray*}
h 
   &  := &q^{25}x^{10}\partial^{10}+q^{16}(q^{4}+q^{3}+q^{2}+q+1)^{2}x^{9}\partial^{9}\\
   &&  +q^{9}(q^{13}+3q^{12}+7q^{11}+13q^{10}+20q^{9}+26q^{8}\\
   && +30q^{7}+31q^{6}+26q^{5}+20q^{4}+13q^{3}+7q^{2}+3q+1)x^{8}\partial^{8}\\
   && +q^{4}(q^{9}+2q^{8}+4q^{7}+6q^{6}+7q^{5}+8q^{4}+6q^{3}+4q^{2}+2q+1)\\
   && (q^{4}+q^{3}+q^{2}+q+1)(q^{2}+q+1)x^{7}\partial^{7}\\
   &&+q(q^{2}+q+1)(q^{5}+2q^{4}+2q^{3}+3q^{2}+2q+1)\\
   &&(q^{4}+q^{3}+q^{2}+q+1)(q^{2}+1)(q+1)x^{6}\partial^{6}\\
   &&+(q^{10}+5q^{9}+12q^{8}+21q^{7}+29q^{6}+33q^{5}\\
   &&+31q^{4}+24q^{3}+15q^{2}+7q+12)x^{5}\partial^{5}+6x^{3}\partial^{3}+24
\end{eqnarray*}
and $\KK = \QQ$.
We can use \textsc{Singular} to obtain all of its factorizations in the
following way.
\begin{verbatim}
LIB "ncfactor.lib";
ring R = (0,q),(x,d),dp;
def r = nc_algebra (q,1);
setring(r);
poly h = ... //See the polynomial defined above.
homogfacFirstQWeyl_all(h);
[1]:
   [1]:
      1
   [2]:
      x5d5+x3d3+4
   [3]:
      x5d5+6
[2]:
   [1]:
      1
   [2]:
      x5d5+6
   [3]:
      x5d5+x3d3+4
\end{verbatim}

As one can see here, the output is a list containing lists containing
elements in $Q_1$. Those elements in $Q_1$ are factors of $h$, and each list
represents one possible factorization of $h$.

If the user is interested in just one factorization the command
\texttt{homogfacFirstQWeyl} instead of \texttt{homogfacFirstQWeyl\_all}
can be used. The output will then be just one list containing elements
in $Q_1$.

The calculation was run on a on a computer
with a 4-core Intel CPU (Intel{\textregistered}   Core\texttrademark
 i7-3520M CPU with 2.90GHz, 2 physical cores, 2 hardware threads, 32K
L1[i,d], 256K L2, 4MB L3 cache), 16GB RAM and Ubuntu 12.04LTS as
operating system.
The computation time was 0.62 seconds. 
\end{example}
\begin{remark}
  The factorization of products of homogeneous elements in $A_1$ can
  be observed to be faster than the factorization of the same products
  in $Q_1$. The element in the example above,
  i.e. $(x^5\partial^5+6)(x^5\partial^5+x^3\partial^3+4),$
  viewed as an element in $A_1$, takes 0.08s to factorize compared to
  0.62s in the $q$-Weyl case.
This seems to be
way slower considering that both algorithms have the same complexity. But this slowdown is not
due to more steps that need to be done in the algorithm for the
$q$-Weyl algebra, but due to
the parameter $q$ and the speed of calculating in $\QQ(q)$ as the basefield instead of just
in $\QQ$. The internal arithmetic in \textsc{Singular} to handle
parametrized basefields is being improved by the \textsc{Singular} team.
\end{remark}

In fact, there is no computer algebra system known to the authors that
can factor polynomials in the first $q$-Weyl algebra $Q_1$. Therefore,
we cannot compare our algorithms in this case to other implementations.

For the first Weyl algebra $A_1$, there exist other implementations. We can draw a
comparison to the \texttt{DFactor} method in the \texttt{DETools}
package of \textsc{Maple} and the \texttt{nc\_factorize\_all} method
in the \texttt{NCPoly} library of \textsc{REDUCE}.
Furthermore, we were provided with a wrapper for the algorithm ``Coprime
Index 1 Factorizations'' (\texttt{CP1F}) mentioned in \cite{van1997formal} dealing
with polynomials of the form $\KK[x][\theta]$ in order to be able to
compare it to the algorithm for this special case explicitly. This
guarantees a fair evaluation on a core level for an intersection with
homogeneous polynomials that does not invoke the complete
factorization machinery implemented in \texttt{DFactor}.

In the next subsection, we will only compare \texttt{DFactor} and
\texttt{nc\_factorize\_all} to our implementation. Later on, we will
compare the wrapper of \texttt{CP1F} implemented in \textsc{Maple} to
our implementation, as we have to choose for the comparison a special set of polynomials,
namely the homogeneous ones supported by \texttt{CP1F}.

\subsection{Comparison to \texttt{DFactor} and \texttt{nc\_factorize\_all}}
We used version 17 of
\textsc{Maple} and version 3.8 of \textsc{REDUCE}.
In order to make our benchmarks reproducible, we utilized the
\textsc{SDEval} framework presented in \cite{HLN2013}. You can download
the sources and the results of the computations on one of the author's
website: \url{https://cs.uwaterloo.ca/~aheinle/software_projects.html}.

\begin{remark}
  As mentioned before, the algorithm \texttt{DFactor} implemented in \textsc{Maple}
  factorizes over the rational Weyl algebra, i.e. the variable $x$ is
  a rational argument having adjusted commutation rules with
  $\partial$. This is a weaker assumption on the input since the ring
  that is dealt with there is larger. The comparison is still valid,
  since we have shown in Theorem \ref{thrm:irreducibleRationalWeyl} that a
  factorization of a homogeneous polynomial into irreducible elements
  over $A_1$ cannot be further refined in the first Weyl algebra with
  rational coefficients.
  
  We will not go into detail about how the algorithm in \textsc{Maple}
  works. The interested
  reader can find details in \cite{Hoeij:1997}. It works with
  collections of exponential parts and their multiplicities at all
  singularities of a given differential operator $f$ and subsequent
  calculation of left and right hand factors.

  The algorithm implemented in \textsc{REDUCE} is also working with
  the polynomial Weyl algebra. In fact, the algorithm written there
  can be applied to a broad class of polynomial noncommutative
  rings.

  Details about the functionality of the algorithm in \textsc{REDUCE}
  are unfortunately not available. One can only try to understand it
  from the code that is given open source. It uses several Gr\"obner
  basis computations in order to find its solutions.
\end{remark}

\begin{example}
  Consider again the element $$h :=
  (x^5\partial^5+6)\cdot(x^5\partial^5+x^3\partial^3+4)$$
  in expanded form.
  \begin{itemize}
    \item \textsc{Singular}: Found two factorizations in less than a
      second.
    \item \textsc{Maple}: Found one factorization after 29 seconds;
      The factors are huge (size of the output file is around 100KB).
    \item \textsc{REDUCE}: Did not terminate after 9 hours of calculation.
  \end{itemize}
\end{example}

\begin{example}
  We experimented with other randomly generated products of two homogeneous polynomials in
  the first Weyl algebra. The results are listed in the next table. An
  entry labeled with ``-- NT --'' stands for ``no termination after
  two hours''.
  \newpage
  \begin{center}
  \begin{longtable}{ | p{130pt} | p{130pt} | p{130pt} |}
    \hline
    \textbf{\textsc{Singular}} & \textbf{\textsc{Maple}} &
    \textbf{\textsc{REDUCE}}\\
    \hline
    \hline
    \multicolumn{3}{| c |}{$(x^{10}\partial^{10}+5x\partial +
      7) \cdot x^2 \cdot (x^{11}\partial^{11}+3x^7\partial^7+x\partial+4)$:}\\
    \hline
    0.08s; 12 factorizations & -- NT -- & SEGMENTATION FAULT\\ 
    \hline
    \hline
    \multicolumn{3}{| c
      |}{$(x^5\partial^5+6)\cdot(x^5\partial^5+x^3\partial^3+4) \cdot \partial^{10}$:}\\
    \hline
    0.77s; 132 factorizations & 11.18s; 1 factorizations & -- NT -- \\
    \hline
    \hline
    \multicolumn{3}{| c |}{$(5x^{10}\partial^{10} + 7x^9\partial^9 + 8x^8\partial^8 + 9x^7\partial^7 + 6x^6\partial^6 + 5x^5\partial^5 + 8x^4\partial^4 + 5x^3\partial^3 + 9x^2\partial^2 + 9x\partial + 6)\cdot\partial^{20}$:}\\
    \hline
    0.18s; 21 factorizations & -- NT -- & -- NT --\\
    \hline
    \hline
    \multicolumn{3}{| c |}{$(7x^{15}\partial^{15} +
      x^{13}\partial^{13} - x^{12}\partial^{12} - 3x^{10}\partial^{10}
      + 2x^9\partial^9 + x^8\partial^8 + x^7\partial^7 - x^5\partial^5
      - 9x^4\partial^4 + x\partial - 1)\cdot$}\\
    \multicolumn{3}{| c |}{$ (8x^{13}\partial^{13} +
      3x^{12}\partial^{12} + x^{11}\partial^{11} -
      2x^{10}\partial^{10} + 10x^8\partial^8 - 3x^7\partial^7 +
      2x^5\partial^5 + x^4\partial^4 + 38x\partial + 1)\cdot \partial^6$:}\\
    \hline
    5.88s; 504 factorizations & -- NT -- & -- NT -- \\
    \hline
    \hline
    \multicolumn{3}{| c |}{$(x^{10}\partial^{10} + 23x^9\partial^9 +
      3x^8\partial^8 - 9x^7\partial^7 - x^5\partial^5 + 3x^4\partial^4
      + 6x^3\partial^3 + 4x\partial + 1) \cdot $}\\
    \multicolumn{3}{| c |}{$ (-x^8\partial^8 + 4x^7\partial^7 -
      x^6\partial^6 + 4x^5\partial^5 - 5x^4\partial^4 + x^2\partial^2
      - 7x\partial - 10) \cdot x^{10}$:}\\
    \hline
    0.76s; 132 factorizations & -- NT -- & -- NT -- \\
    \hline
    \hline    
    \multicolumn{3}{| c |}{$(-2x^{24}\partial^{24} + x^{23}\partial^{23} + 4x^{22}\partial^{22} -
      110x^{21}\partial^{21} + x^{20}\partial^{20} + x^{19}\partial^{19} + x^{18}\partial^{18} +
      x^{17}\partial^{17} +$}\\
    \multicolumn{3}{| c |}{$ 5x^{16}\partial^{16} - 7x^{15}\partial^{15} + 4x^{14}\partial^{14} - x^{13}\partial^{13} + x^{12}\partial^{12}
      - 2x^{11}\partial^{11} + x^9\partial^9 + 5x^8\partial^8 + x^7\partial^7 + $}\\
    \multicolumn{3}{| c |}{$6x^5\partial^5 +
      x^4\partial^4 + 2x^3\partial^3 + 219x^2\partial^2 + x\partial - 1) \cdot(-x^{25}\partial^{25} + x^{24}\partial^{24} -
      32x^{23}\partial^{23} + x^{22}\partial^{22} + $}\\
    \multicolumn{3}{| c |}{$7x^{21}\partial^{21} + 61x^{20}\partial^{20} -
      2x^{18}\partial^{18} + x^{16}\partial^{16} + 2x^{15}\partial^{15}- 2x^{14}\partial^{14} -$}\\
    \multicolumn{3}{| c |}{$x^{12}\partial^{12} - 3x^{11}\partial^{11} + 2x^{10}\partial^{10} + 2x^8\partial^8 - 9x^7\partial^7
      - x^6\partial^6 + x^5\partial^5 + 4x^3\partial^3 + x^2\partial^2)$:}\\
    \hline
    28.23s; 230 factorizations & -- NT -- & -- NT -- \\
    \hline
    \hline
    \multicolumn{3}{| c |}{$(x^{10}\partial^{10} + 13x^9\partial^9 - x^8\partial^8 +
      4x^7\partial^7 + 13x^6\partial^6 - 3x^5\partial^5 - 37x^4\partial^4 - x^3\partial^3 +
      x^2\partial^2 + x\partial - 1)\cdot $}\\
    \multicolumn{3}{| c |}{$(-x^{10}\partial^{10} -23x^9\partial^9 +
      3x^8\partial^8 + x^7\partial^7 - x^6\partial^6 - 2x^5\partial^5
      - 2x^4\partial^4 + 2x^3\partial^3 - x^2\partial^2 - 2x\partial -
      2)$:}\\
    \hline
    0.06s; 6 factorizations & -- NT -- & -- NT --\\
    \hline
    \hline
    \multicolumn{3}{| c |}{$(98x^{15}\partial^{15} +
      40x^{14}\partial^{14} + 98x^{13}\partial^{13} +
      44x^{12}\partial^{12} + 55x^{11}\partial^{11} +
      96x^{10}\partial^{10} + 95x^9\partial^9 +$}\\
    \multicolumn{3}{| c |} {$7x^8\partial^8 + 56x^7\partial^7 +
      56x^6\partial^6 + 40x^5\partial^5 + 11x^4\partial^4 +
      40x^3\partial^3 + 78x^2\partial^2 +$}\\
    \multicolumn{3}{| c |} {$13x\partial + 19) \cdot (61x^{15}\partial^{15} +
      50x^{14}\partial^{14} + 83x^{13}\partial^{13} +
      11x^{12}\partial^{12} + 89x^{11}\partial^{11} +$}\\
    \multicolumn{3}{| c |} {$55x^{10}\partial^{10} + 81x^9\partial^9 +
      63x^8\partial^8 + 22x^7\partial^7 + 10x^6\partial^6 +$}\\
    \multicolumn{3}{| c |} {$35x^5\partial^5 + 90x^4\partial^4 +
      60x^3\partial^3 + 20x^2\partial^2 + 30x\partial + 43)$:}\\
    \hline
    0.08s; 2 factorizations & --NT -- & -- NT -- \\
    \hline
    \hline
     \multicolumn{3}{| c |}{$ (85x^{20}\partial^{20} +
       80x^{19}\partial^{19} + 27x^{18}\partial^{18} +
       74x^{17}\partial^{17} + 49x^{16}\partial^{16} +
       95x^{15}\partial^{15} + 96x^{14}\partial^{14}$}\\
     \multicolumn{3}{| c |}{$+ 37x^{13}\partial^{13} +
       26x^{12}\partial^{12} + 93x^{11}\partial^{11} +
       39x^{10}\partial^{10} + 19x^9\partial^9 + 48x^8\partial^8 +
       82x^7\partial^7$}\\
     \multicolumn{3}{| c |}{$+ 26x^6\partial^6 + 26x^5\partial^5 +
       7x^4\partial^4 + 61x^3\partial^3 + 8x^2\partial^2 + 81x\partial
       + 88)^2$:}\\
     \hline
     0.08s; 1 factorizations & -- NT -- & -- NT --\\
     \hline
  \end{longtable}
  \end{center}
\end{example}

The conclusion we can draw at this point is: Even if homogeneous polynomials seem to
be easy objects to factorize according to the algorithm we propose, they seem to form a worst case class for
the implementations in \textsc{REDUCE} and \textsc{Maple}.

Therefore, with our algorithm we are now able to factorize more
polynomials using computer algebra systems: homogeneous
polynomials in $Q_1$ in general, and for $A_1$ we have broaden the range of
polynomials that can be factorized in a feasible amount of time or
even sometimes at all.

Moreover, our approach can be used to
enhance existing algorithms and their implementations. Namely, since
checking a given polynomial for the homogeneity is a very cheap
procedure as we have seen in Lemma \ref{lem:cheapHomogTest}, and since
for the case of a homogeneous polynomial our proposed algorithm can be applied,
the algorithm for the case of homogeneous polynomials --
appearing, for instance, as factors of a bigger polynomial --
can be eliminated from further computations.

\subsection{Comparison to \texttt{CP1F}}

As indicated before, we were provided a wrapper to the function
implemented in \textsc{Maple} that represents \texttt{CP1F}, whose supported input
polynomials are of the form $\KK[x][\theta]$. Hence, there is a
nontrivial intersection with homogeneous polynomials in
$A_1$. Comparing it to the implementation of our Algorithm
\ref{alg:homogfac} on homogeneous polynomials of $\theta$-degree
between 20 and 400, we obtain the following timings.

\begin{center}
  \begin{longtable}{ | p{100pt} | p{100pt} | p{100pt} |}
    \hline
    \textbf{Example} & \textbf{Algorithm \ref{alg:homogfac}} & \textbf{\texttt{CP1F}}\\
    \hline
    \hline
    Degree 20 & 0.04s & 0.17s \\
    \hline
    Degree 40 & 0.07s & 0.61s \\
    \hline
    Degree 60 & 0.11s & 1.66s\\
    \hline
    Degree 100 & 0.26s & 6.39s\\
    \hline
    Degree 200 & 2.03s & 296.78s\\
    \hline
    Degree 250 & 2.86s & 454.17s\\
    \hline
    Degree 300 & 5.9s  & 370.49s\\
    \hline
    Degree 350 & 8.78s  & 1741.53s\\
    \hline
    Degree 400 & 14.62s & 4355.32s\\
    \hline
  \end{longtable}
\end{center}

We can derive from this table that for small degrees, the timings are close
to each other. 
With increasing degree though, the difference in performance becomes more visible, and one observes also different asymptotic behaviours, as Figure \ref{fig:graphic} visualizes.

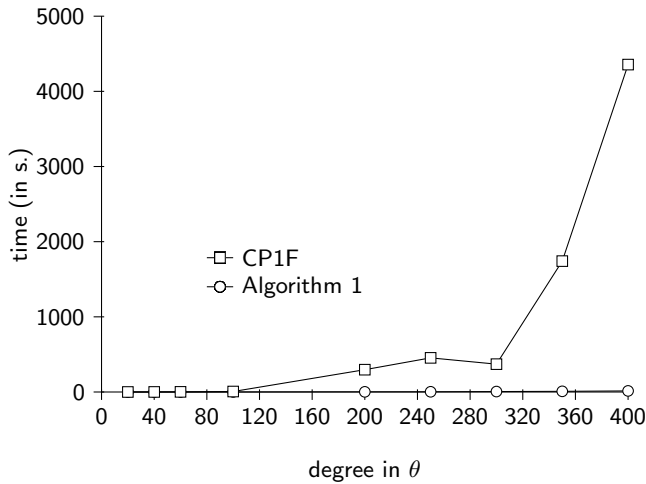
\begin{figure}
\caption{Visualization of asymptotic behaviour of \texttt{CP1F} and
  Algorithm \ref{alg:homogfac}}
\label{fig:graphic}

\begin{tikzpicture}[y=.2cm, x=.7cm,font=\sffamily]
	\draw (0,0) -- coordinate (x axis mid) (10,0);
    	\draw (0,0) -- coordinate (y axis mid) (0,25);
    	\foreach \x/\xtext in {0/0,1/40,2/80,3/120,4/160,5/200,6/240,7/280,8/320,9/360,10/400}
     		\draw (\x,1pt) -- (\x,-3pt)
			node[anchor=north] {\xtext};
    	\foreach \y/\ytext in {0/0,5/1000,10/2000,15/3000,20/4000,25/5000}
     		\draw (1pt,\y) -- (-3pt,\y) 
     			node[anchor=east] {\ytext}; 
	\node[below=0.8cm] at (x axis mid) {degree in $\theta$};
	\node[rotate=90, above=0.8cm] at (y axis mid) {time (in s.)};

	\draw plot[mark=*, mark options={fill=white}] 
		file {homog.data};
	\draw plot[mark=square*, mark options={fill=white}]
		file {cp1f.data};
	
	\begin{scope}[shift={(2,7)}] 
	\draw (0,0) -- 
		plot[mark=*, mark options={fill=white}] (0.25,0) -- (0.5,0) 
		node[right]{Algorithm \ref{alg:homogfac}};
	\draw[yshift=\baselineskip] (0,0) -- 
		plot[mark=square*, mark options={fill=white}] (0.25,0) -- (0.5,0)
		node[right]{CP1F};
	\end{scope}
\end{tikzpicture}
\end{figure}


\section{Conclusion and Future Work}

With this paper, we contributed an algorithm for the
factorization problem considering $[-1,1]$-homogeneous polynomials in the first $q$-Weyl
algebra over an arbitrary field $\KK$. For computable
fields, we discussed a complexity estimate for our
approach. Our approach is implemented and distributed with the
computer algebra system \textsc{Singular}.

Furthermore, we also considered the special case of the first Weyl
algebra and showed that our algorithm beats for the large class of $[-1,1]$-
homogeneous polynomials current implementations in terms of speed and
elegancy of the solutions. Due to Theorem
\ref{thrm:irreducibleRationalWeyl}, we can even state that the
factorizations that our algorithm finds cannot be further refined when
factoring over the rational Weyl algebra. This result is interesting
by itself and could play a role for future research on the question
how to characterize arbitrary irreducible elements in the polynomial
first Weyl algebra, that become reducible after localization.

We can  construct a family of
polynomials where the implementation in \textsc{Singular} is the only
one that is able to factorize those elements in a feasible amount of
time and memory consumption. As our techniques are easy to implement,
they can be used to extend existing implementations in order to broaden
the range of polynomials in the first Weyl algebra that we are
nowadays able to factorize using a computer algebra system.

The canonical next step would be to factor general polynomials in the first
\mbox{($q$-)Weyl} algebra. A first attempt to that was done in
\cite{Heinle:2010}. We made highly use of our knowledge about the
grading of the first Weyl algebra. The approach has been almost
completely of combinatorial nature. Its speed and quality of solutions
was comparable in many cases to the other implementations, but there
was still space for improvement. This improvement has been made in
\cite{Heinle:2012}. \texttt{ncfactor.lib} is distributed with \textsc{Singular} since
version 3-1-3, and the improved version is available since version
3-1-6.
Recently, we generalized our ideas from this paper to the $n$th polynomial Weyl
algebra, where we are now able to factorize general polynomials, as
one can see in \cite{giesbrecht2014factoring}.

As another future work, we plan to generalize Theorem
\ref{thrm:irreducibleRationalWeyl} to homogeneous polynomials in the
$n$th Weyl algebra. Moreover, a generalization of our complexity
estimates for factoring homogeneous polynomials in the first
($q$-)Weyl algebra to the $n$th ($\underline{q}$)-Weyl algebra is planned.


\section*{Acknowledgements}

We would like to thank to Dima Grigoriev for discussions on the subject, 
and Mark van Hoeij for his expert opinion. Many thanks to Mark Giesbrecht for his helpful suggestions and comments. We are grateful to
Wolfram Koepf and Martin Lee for providing us with interesting
examples, and to Daniel Rettstadt and Johannes Hoffmann for stimulating
exchange of opinions.

We acknowledge the helpful suggestions and comments of the anonymous referees.

\renewcommand\bibsection{\section*{References}}

\nobreak
\newcommand{\Hoeven}{\relax}


\end{document}